\documentclass[conference, letterpaper]{IEEEtran}
\IEEEoverridecommandlockouts
\usepackage{cite}
\usepackage{amsmath,amssymb,amsfonts}
\usepackage{algorithmic}
\usepackage{graphicx}
\usepackage{textcomp}
\usepackage{xcolor}
\def\BibTeX{{\rm B\kern-.05em{\sc i\kern-.025em b}\kern-.08em
    T\kern-.1667em\lower.7ex\hbox{E}\kern-.125emX}}

\hyphenation{modeled discriminative prior problem learning privacy-utility trade-off Comparison systems coding}


\usepackage{amssymb}

\setcounter{tocdepth}{3}
\usepackage{graphicx}
\usepackage{url}
\usepackage{epsfig}
\usepackage{graphicx}
\usepackage{amsfonts}
\usepackage{amssymb}
\usepackage{amsbsy}
\usepackage{amsfonts}
\usepackage{amsthm, mathrsfs}
\usepackage{longtable}
\usepackage{stackengine}
\usepackage{mathtools}
\usepackage{stmaryrd}
\usepackage{relsize}
\usepackage{multirow}
\usepackage{bm}
\usepackage[T1]{fontenc}
\usepackage{url}
\usepackage{subcaption}
\usepackage{caption}
\usepackage{dsfont}
\usepackage{cite}

\usepackage{color}                
\usepackage{xfrac}          

\usepackage{amssymb,amsmath,amsthm,enumitem}

\theoremstyle{definition}
\newtheorem{theorem}{Theorem}
\newtheorem{proposition}{Proposition}

\begin{document}

\title{Single-Component Privacy Guarantees in Helper~Data~Systems and\\Sparse Coding with Ambiguation}

 \author{%
   \IEEEauthorblockN{Behrooz~Razeghi\IEEEauthorrefmark{1},
   				   Taras~Stanko\IEEEauthorrefmark{2}, 
                     Boris~\v{S}kori\'{c}\IEEEauthorrefmark{2},
                     Slava~Voloshynovskiy\IEEEauthorrefmark{1}
                     \thanks{B.\,Razeghi has been supported by the ERA-Net project
ID\_IoT No 20CH21\_167534 and T.\,Stanko by NWO project Espresso (628.001.019).}
                     }
   \IEEEauthorblockA{\IEEEauthorrefmark{1}%
                     Department of Computer Science, University of Geneva, Switzerland\\
                    \{\texttt{behrooz.razeghi, svolos}\}@\texttt{unige.ch}}
   \IEEEauthorblockA{\IEEEauthorrefmark{2}%
                     Department of Mathematics and Computer Science, Eindhoven University of Technology, the Netherlands\\
                     \{\texttt{t.stanko, b.skoric}\}@\texttt{tue.nl}}
 }

\maketitle

\begin{abstract}
We investigate the privacy of
two approaches to (biometric) template protection:
Helper Data Systems and Sparse Ternary Coding with Ambiguation.
In particular, we focus on a privacy property that is often overlooked,
namely how much leakage exists about one specific binary property of
one component of the feature vector. 
This property is e.g. the sign or an indicator that a threshold is exceeded.

We provide evidence that both approaches are able to protect such sensitive binary variables,
and discuss how system parameters need to be set.

\end{abstract}

\begin{IEEEkeywords}
privacy, biometric authentication, template protection
\end{IEEEkeywords}

\setlength{\parindent}{0mm}


\section{Introduction}
\label{Sec:Introduction}

\subsection{Privacy-preserving storage of biometric enrollment data}
\label{sec:introprivacy}

\vspace{-3pt}

Biometric data such as fingerprints and irises cannot be treated as a secret.
After all, we leave latent fingerprints on many objects that we touch, and 
high-resolution photos of faces reveal a lot about our irises.
Nonetheless, person authentication based on biometrics is still possible, provided
that the verifier performs good liveness detection.
In spite of the not-really-secret nature of biometric data
there are very good reasons to treat them as confidential.
Storing biometric databases in unprotected form would lead to various privacy issues.
In this paper we focus on one particular privacy problem:
some biometric data reveal medical conditions.

The protection of this kind of data must be as good as the protection of passwords.
The attacker model in the case of password storage states that the adversary is an {\em insider}, 
i.e., somebody who has access to cryptographic keys.
Furthermore, the standard use case considered in most of the literature dictates that
the biometric prover does not have to type long keys or to present a smartcard.
This combination of attacker model and use case implies that simply encrypting the confidential data is not an option.
The typical solution for passwords is to apply a one-way function and to store
the hash of each password.
However, this solution does not work for noisy data such as biometrics;
one bit flip in the input of the hash function causes 50\% bit flips in the output.

Several techniques have been developed for securely storing noisy credentials, also known as template protection, in the
above given context:
(i) Helper Data Systems (HDS), also known as fuzzy commitment, secure sketch, fuzzy extractor
\cite{JW99,LT2003,DRS2004,DORS2008};
(ii) Locality Sensitive Hash (LSH) functions \cite{indyk1998approximate, datar2004locality}; 
(iii) homomorphic encryption \cite{lagendijk2012encrypted, aguilar2013recent}; and most recently
(iv) Sparse Coding with Ambiguation (SCA) \cite{Razeghi2017wifs, Razeghi2018icassp, Razeghi2018eusipco, Razeghi2019icip}.

\subsection{Comparison of Template Protection Techniques}
\label{sec:introcomparison}

\vspace{-3pt}

The LSH approach is fast but does not give clear privacy guarantees.
Homomorphic encryption has excellent privacy, but is computationally expensive.
In this paper we will not consider the LSH and homomorphic crypto approach.

The HDS approach is the oldest and is well studied. 
Nevertheless, the narrow privacy question of {\em protecting one specific aspect} of the biometric, which is relevant for the above mentioned medical condition,
has not been studied in detail.

The aim of this paper is to compare the privacy properties of the HDS and the SCA approach,
in particular the `medical condition' aspect.
Here it is important to note that previous work on SCA has focused only on the
inability of an adversary to reconstruct the {\em full biometric} from the enrollment data;
that is {\em not} the property we will be looking at in the current paper.
Mostly, in the literature, the protection of a vector $\mathbf{x} \in \mathbb{R}^N$ is considered. 
However, often it is the projection of $\mathbf x$ onto some fixed direction $\mathbf v$ that is relevant,
$z=\mathbf x\cdot\mathbf v$.
The range in which $z$ lies can be privacy-sensitive, e.g. the sign of $z$ or whether $z$ is far away from average.

\subsection{Contributions}
\label{sec:contributions}

\vspace{-3pt}

We concentrate on one component $x_n$ of a to-be-protected random vector $\mathbf x$,
in particular a binary property $\psi(x_n)$, which is either the sign or an `extremeness' indicator
that checks if $|x_n|$ exceeds some threshold.
We investigate how much information leaks about $\psi(x_n)$ through the enrollment data.
\begin{itemize}[leftmargin=*]
\item
In quantizing HDSs high leakage can occur if a bad parameter choice is made.
The best choice is to take an {\em even} number of quantization intervals, and to 
subdivide them into {\em two} helper data intervals;
then there is zero leakage about the sign and the `extremeness'.
\item
The Code Offset Method causes negligible leakage.
\item
In the SCA mechanism,
leakage about the `extremeness' bit can be made negligibly small by setting
the ambiguation noise level larger than the Hamming weight of the sparse ternary representation.
This noise has little impact on the performance of the authentication system,
since it gets removed in case of a genuine user's verification measurement.
\end{itemize}


\section{Preliminaries}
\label{sec:preliminaries}

\vspace{-2pt}

\subsection{Notation and Terminology}
\label{sec:notation}

\vspace{-2pt}

Vectors and matrices are denoted by boldface lower-case ($\mathbf{v}$) and upper-case ($\mathbf{M}$). 
When there is no distinction between a scalar ($w$), vector ($\mathbf{w}$), or matrix ($\mathbf{W}$), 
we write $\mathsf{w}$. 
The notation $\mathbf{m}{\left(  j \right)}$ denotes the $j$-th column of $\mathbf{M}$. 
An enrollment measurement will be written as a vector $\mathbf x$,
and the verification measurement as~$\mathbf y$.
When a distinction needs to be made between a random variable (RV) and its numerical value,
the RV is written in capitals, and the value in lowercase.
Expectation over $x$ is denoted as $\mathbb E_x$.
The Shannon entropy of a discrete RV $X$ is denoted as $H(X)$ and is defined as
$H(X)\!=\!\sum_x p_x\log\frac1{p_x}$.
The conditional entropy of $X$ given $Y$ is written as $H(X|Y)$ and is defined as
$H(X|Y)\!=\!\mathbb E_y H(X|Y\!=\!y)$.
The mutual information between $X$ and $Y$ is
$I(X;Y)\!=\!H(X)-H(X|Y)$.
For $p\in[0,1]$ the binary entropy function $h$ is defined as
$h(p)=p\log \frac1p+(1-p)\log\frac1{1-p}$.
We use the notation $\left[ n \right]=\{   1,  ..., n   \}$. 
The superscript $(\cdot)^{\!T}$ stands for the transpose and $(\cdot)^{\! \dagger}$ for pseudo-inverse. 
The logarithm `log' has base~$2$.
Bitwise XOR is denoted as $\oplus$.
The Heaviside step function is written as $\Theta(\cdot)$.

We will work with the following {\em authentication} setting.
The Verifier owns an enrolment database of public data $P(c)$
for a set of users $c\in [C]$. In the verification phase he is presented with a vector
$\mathbf y$ and a user label $c$; his task is to decide if $\mathbf y$ is consistent 
with $P(c)$.

\begin{figure}[t]
    \centering
     \hspace{-8pt}
        \begin{subfigure}[h]{0.24\textwidth}
        \includegraphics[scale=0.385]{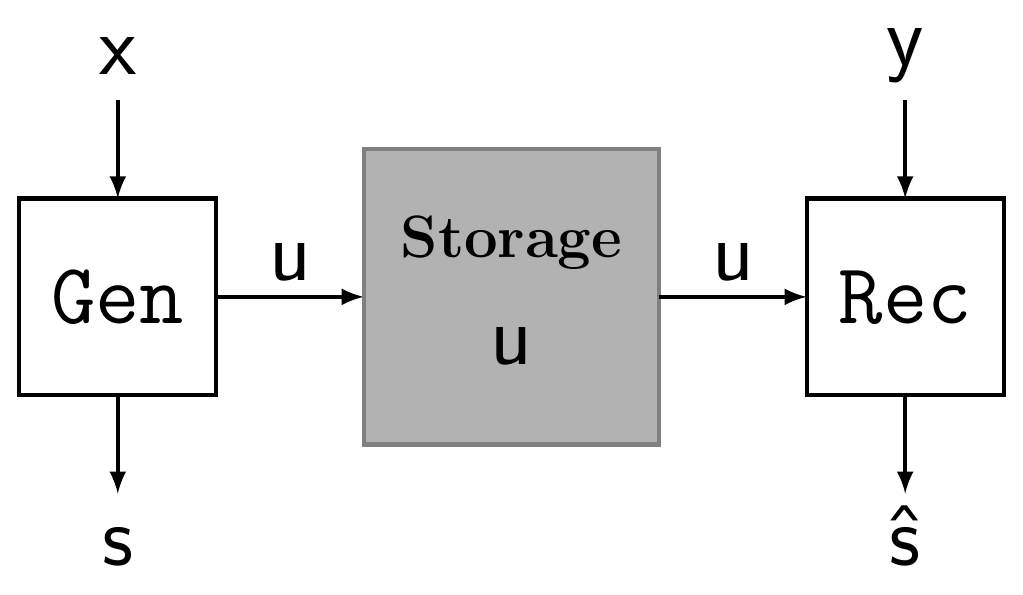}%
            \vspace{-10pt}
        \caption{ }
            \vspace{-2pt}
        \label{fig:HDS}
    \end{subfigure}%
~
       \begin{subfigure}[h]{0.24\textwidth}
        \includegraphics[scale=0.385]{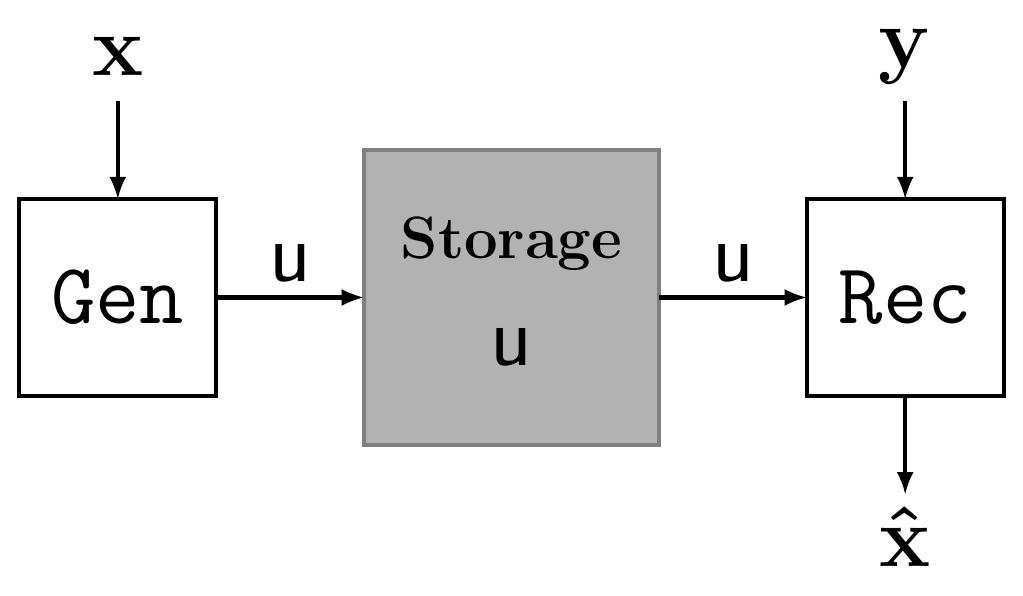}%
         \vspace{-10pt}
        \caption{ }
            \vspace{-2pt}
        \label{fig:SCA}
    \end{subfigure}
    \caption{\it Data flow in: (a) generic Helper Data System and (b) general Sparse Coding with Ambiguation mechanism.}
    \label{Fig:DataFlow_HDS_SCA}
\end{figure}

\subsection{Zero Leakage Helper Data Systems}
\label{sec:ZLHDS}

\vspace{-2pt}

A HDS in its most general form is shown in Fig.\,\ref{fig:HDS}. 
The {\tt Gen} procedure takes as input a measurement $\mathsf{x}$. 
{\tt Gen} outputs a secret $\mathsf{s}$ and public Helper Data $\mathsf{u}$, where $\mathsf{x}$, $\mathsf{s}$ and $\mathsf{u}$ can be scalar, vector, or matrix, in general. 
The helper data is stored in a memory that can be accessed by the adversary. 
In the reproduction phase, a fresh measurement $\mathsf{y}$ is obtained. 
Typically $\mathsf{y}$ is close to $\mathsf{x}$ but not identical. 
The {\tt Rec} procedure takes $\mathsf{y}$ and $\mathsf{u}$ as input.
It outputs $\mathsf{\hat{s}}$, an estimate of $\mathsf{s}$. 
If $\mathsf{y}$ is sufficiently close to $\mathsf{x}$, then $\mathsf{\hat{s}}= \mathsf{s}$. 
The `Zero Leakage' (ZL) property is defined as $I(U;S)=0$,
i.e.,\,the helper data reveals nothing about the secret.
Obviously, $U$ has to leak about $X$, since $U$ is a function of~$X$.
When $ \mathsf{s} = \mathsf{x}$ the HDS is referred to as a Secure Sketch. 
When $S$ given $U$ has a uniform distribution, the HDS is called a Fuzzy Extractor.
If $X$ is a continuum variable, the first step in the signal processing is discretisation.
For this purpose, a special ZLHDS has been designed 
\cite{VTOSS10,dGSdVL,SAS2016}
which reduces quantisation errors. 
The distribution of $X$ needs to be known.
After discretisation the Code Offset Method can be applied (Section~\ref{sec:COM}).

The discretising ZLHDS is shown in Fig.\,\ref{fig:stripes}.
Consider a source $X\in\mathbb R$, with $X\sim f(x)$.
Let $F$ be the cumulative distribution.
The $x$-axis is divided into $J$ quantisation intervals corresponding to the
extracted secret $s\in \{0, 1, \ldots, J-1 \}$.
The distribution of $S$ is not necessarily uniform.
Let $q_s$ be the left boundary of the interval $S=s$.
Let $\{ p_s\}_{s=0}^{J-1}$ denote the probabilities of the $s$-values.
Then $F(q_s)=\sum_{i=0}^{s-1}p_i$.
Each $s$-interval is equiprobably divided into $m$ sub-intervals
(depicted as grayscales in Fig.\,\ref{fig:stripes}); the helper data $u$ is defined as the index of the sub-interval in which the enrollment $x$ lies.
The $s$ and $u$ are computed from $x$ as follows:\vspace{-3pt}
\begin{eqnarray}
    s=\max \{t| q_t\leq x\}
    &;&
    u=\left\lfloor m\frac{F(x)-F(q_s)}{p_s} \right\rfloor.
\end{eqnarray}
In the limit $m\to\infty$, the helper data can be seen as a {\em quantile} 
$\tilde u=\frac{u}{m}\in[0,1)$ within the $s$-interval.
It holds that
$F(x) = F(q_s) + \tilde{u} p_s$.

\begin{figure}[!t]
\centering
\includegraphics[scale=0.28]{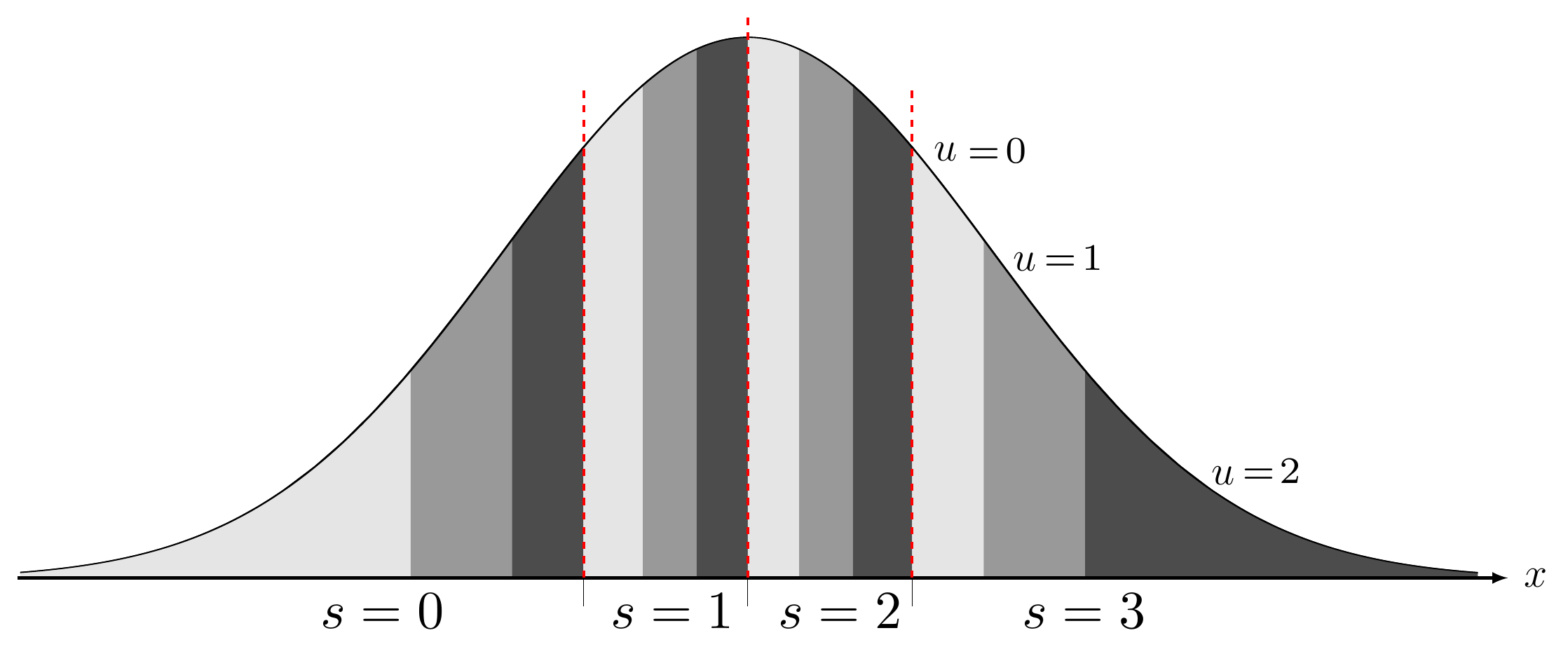}
\caption{\it Example of the Zero Leakage discretising HDS with four quantisation intervals $s\in \{0,1,2,3\}$.
The discrete helper data $u\in  \{0,1,2\}$ is indicated as grayscales.}
\label{fig:stripes}
\end{figure}

\subsection{The Code Offset Method \cite{BBCS1991,JW99} }
\label{sec:COM}

\vspace{-2pt}

Consider $\psi (\mathbf{x}), \psi (\mathbf{y}) \in\{0,1\}^N$. 
Consider a linear binary error-correcting code with syndrome function ${\tt Syn}:\{0,1\}^N \to\{0,1\}^{N-k}$
and syndrome decoding function
${\tt SynDec}:\{0,1\}^{N-k}\to\{0,1\}^N$, where $k$ is the message length. 
The Code Offset Method in its simplest form can be used as a Secure Sketch for any distribution of~$X$.
The helper data and the reconstruction $\widehat{\psi (\mathbf{x})}$ are defined as follows:\vspace{-3pt}
\begin{eqnarray}
    \mathbf{u}={\tt Syn}\, \psi (\mathbf{x}); \;\;
   \widehat{\psi ( \mathbf{x})} = \psi (\mathbf{y}) \oplus {\tt SynDec}(\mathbf{u} \oplus{\tt Syn}\,\psi (\mathbf{y})).\!\!
\end{eqnarray}


\subsection{Sparse Ternary Coding (STC) \cite{Sohrab_ISIT2017, Razeghi2018icassp}}
\label{ssec:SparseTernaryCoding}

\vspace{-3pt}

The encoder is a mapping $\varphi \! :  \! \mathbb{R}^N  \! \rightarrow \! \{ -1, 0 , +1 \}^L$, 
where $L$ may be smaller, equal to, or larger than $N$. 
The $\varphi(\mathbf x)$ is a {\em sparse encoding} of $\mathbf x$; the number of nonzero entries is $S_t\ll L$,
which is called the sparsity level. 
The {\bf encoder} first applies a projection matrix $\mathbf W\in{\mathbb R}^{L\times N}$ and then element-wise thresholding 
$\psi_\lambda^{\rm stc}$, where $\lambda$ is a parameter,
$\psi_\lambda^{\rm stc}(q)\triangleq {\rm sign}(q)\Theta(|q|-\lambda)$ (see Fig.\,\ref{fig:TernaryOperator}).
\begin{equation}
\label{Eq:STC}
\mathbf{v}  \! = \varphi \left(\mathbf{x} \right) =   \psi_{\lambda}^{\mathrm{stc}} \left( \mathbf{W} \mathbf{x}  \right) \! \in \! \{ -1, 0 , +1 \}^L.
\end{equation}
The threshold $\lambda$ is tuned to get the desired sparsity level $S_t$.
The {\bf decoder} produces an estimator for $\mathbf x$ as $\widehat{\mathbf x}={\mathbf W}^\dag \mathbf v$.


\subsection{Sparse Binary Coding (SBC)}
\label{ssec:SparseBinaryCoding}

Here the thresholding function is $\psi_\tau^{\rm sbc}(x)\triangleq \Theta(|x|-\tau)$ 
(Fig.\,\ref{fig:SparseBinaryOperator}).  
Given a (raw) feature vector $\mathbf{x} \in \mathbb{R}^N$ SBC generates a binary vector 
$\psi_{\tau}^{\mathrm{sbc}}  \left(\mathbf{x}  \right) \in \{ 0 ,1 \}^N$.


\subsection{Binary Coding (BC)}
\label{ssec:BinaryCoding}

This is the component-wise sign operation (Fig.~\ref{fig:BinaryOperator}) excluding zeros.
Given a (raw) feature vector $\mathbf{x} \in \mathbb{R}^N$ the BC simply generates a binary vector $\psi^{\mathrm{bc}}  \left(\mathbf{x}  \right) = \mathrm{sign} \left(  \mathbf{x} \right)\in \{ -1 ,1 \}^N$.

\subsection{Sparse Coding with Amibiguation (SCA) \cite{Razeghi2017wifs, Razeghi2018icassp}}
\label{sec:ambigu}

Given a sparse ternary vector $\mathbf v=\varphi (\mathbf{x})\in\{-1,0,1\}^L$, 
the ambiguation mechanism $A$ turns $S_n$ randomly chosen
zero components of $\mathbf v$ into a (random) $\pm1$.
The resulting ternary vector $\mathbf u=A(\mathbf v)$ is stored as enrolment data (see Fig.~\ref{Fig:Enrollment}),
together with the matrix $\mathbf W$.
The randomly added nonzero components make it prohibitively difficult to reconstruct 
$\mathbf x$ from $\mathbf u$, while still allowing a verifier to check if a verification measurement $\mathbf y$ is consistent with $\mathbf u$.

\begin{figure}[!t] 
    \centering%
    \begin{subfigure}[b]{0.145\textwidth}
        \includegraphics[scale=0.55]{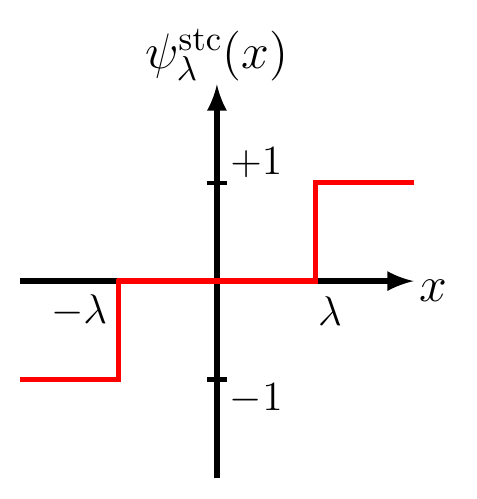}
         \vspace{-9pt}
         \caption{ }
         \label{fig:TernaryOperator}
    \end{subfigure}
     ~ 
    \begin{subfigure}[b]{0.145\textwidth}
        \includegraphics[scale=0.55]{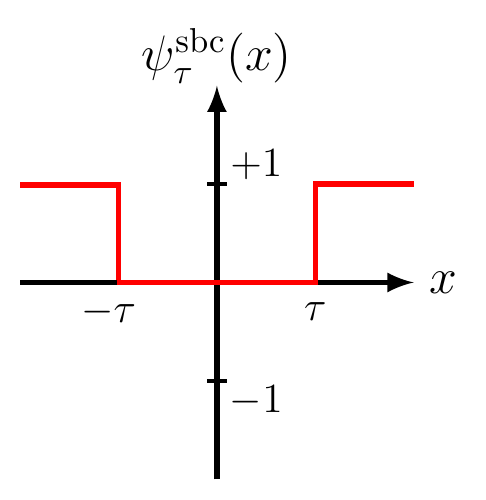}
         \vspace{-9pt}
         \caption{ }
        \label{fig:SparseBinaryOperator}
    \end{subfigure}
    ~ 
    \begin{subfigure}[b]{0.145\textwidth}
        \includegraphics[scale=0.55]{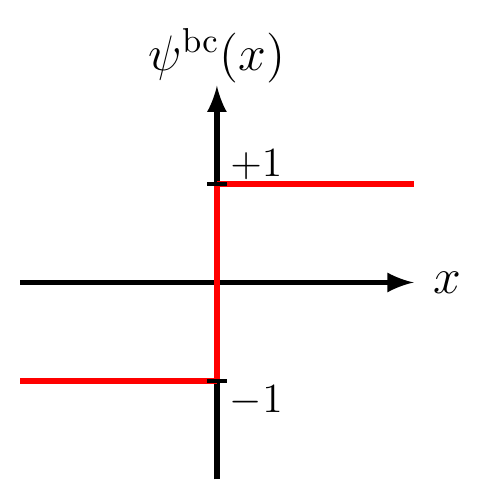}
        \vspace{-9pt}
         \caption{ }
        \label{fig:BinaryOperator}
    \end{subfigure}
    \caption{\it (a) Ternary thresholding; (b) Binary thresholding; (c) Binarisation.}
 \label{Fig:SparsifyingOperators}
\end{figure}

\section{Problem formulation}
\label{sec:ProblemFormulation}

We formally capture the `medical condition' issue as follows.
We model the existence of the privacy-sensitive medical condition as a binary function of one of the components of the enrollment measurement $\mathbf x \in \mathbb{R}^N$.
That is, we say that $\psi (x_n)$ is the quantity that should not leak, for some $n\in[N]$.

We will work with two choices for the function $\psi (\cdot)$ that seem to make sense in our context: 
(a) $\psi (\cdot) = \psi^{\mathrm{sbc}}_{\tau} (\cdot) $ (see Fig.~\ref{fig:SparseBinaryOperator}), and 
(b) $\psi (\cdot) = \psi^{\mathrm{bc}} (\cdot)$ (see Fig.~\ref{fig:BinaryOperator}). 
We will assume that the index $n$ is not known to the legitimate parties at the time of enrollment. Otherwise, there is a trivial solution.

We will consider only distributions of $X$ that are symmetric around $X=0$,
i.e., even functions~$f(x)$.
Furthermore we work in the `perfect enrollment' model, which states that there is
no measurement noise at enrollment time. 
In this way we are erring on the side of caution, overestimating the leakage.

\section{Results for Helper Data Systems}
\label{sec:resultsHDS}

We zoom in on the relevant component $X_n$.
We introduce shorthand notation $Z =  \psi^{\mathrm{sbc}}_{\tau} (X_n) $ %
and $V = \psi^{\mathrm{bc}} (X_n)$.
We write $F$ for the cumulative distribution function of~$X_n$.

\begin{figure}[!t]
\centering
\includegraphics[scale=0.34]{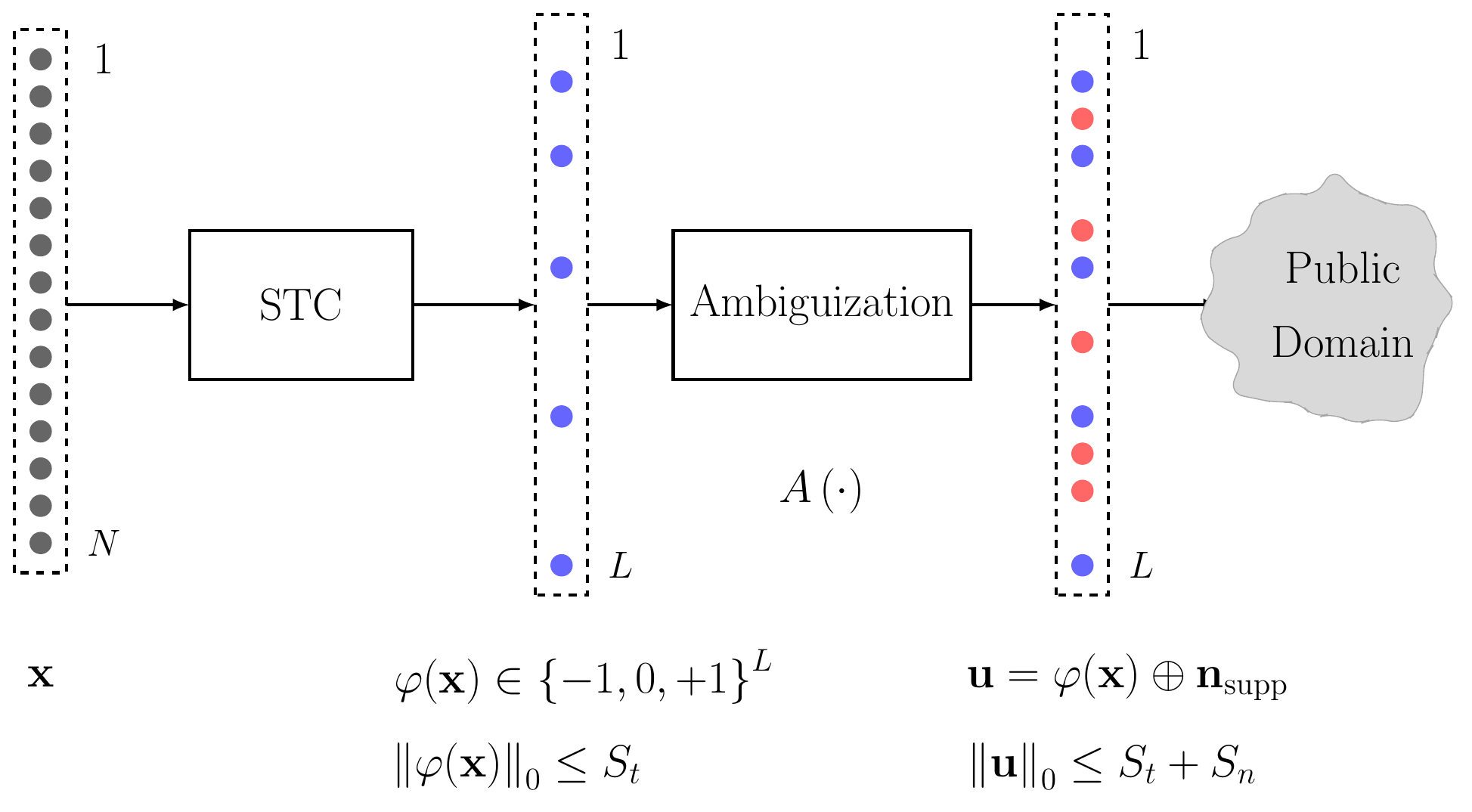}
\caption{\it Enrolment phase of the SCA scheme.}
\label{Fig:Enrollment}
\end{figure}

\subsection{Leakage from the Quantising HDS}
\label{sec:leakHDS1}

First we look at the sign variable~$V$.

\begin{theorem}
\label{th:leakRHDS}
Let the distribution of $X_n$ be an even function. Then $H(V)=1$ and
\begin{eqnarray}
    H(V|U) \! = \! \left\{ \! \!  \! \begin{array}{rl}
    J\mbox{ even}\hfill : & 1 ,\cr
    J=2t+1,\,m\mbox{ even\hfill  :}& h(\frac{1-p_t}2) ,\cr
    J=2t+1,\, m\mbox{ odd \hfill :}&  \frac{m-1}m h(\frac{1-p_t}2)+\frac1m.
    \end{array}\right. \!\!\!\!\! 
\end{eqnarray}
\end{theorem}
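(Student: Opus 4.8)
The plan is to reduce everything to two structural facts about the discretising ZLHDS and then run a short case analysis driven entirely by where the point $x=0$ falls on the quantisation grid.

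First I would record the easy facts. Because $f$ is even, $\mathbb P(V=+1)=\mathbb P(V=-1)=\tfrac12$, so $H(V)=1$. Next, from $u=\lfloor m\tilde u\rfloor$ with $\tilde u=(F(x)-F(q_s))/p_s$, each $s$-interval is cut into $m$ equiprobable sub-intervals, so $\mathbb P(U=u,S=s)=p_s/m$. Hence $U$ is uniform on $\{0,\dots,m-1\}$ and, crucially, $\mathbb P(S=s\mid U=u)=p_s$ independently of $u$ (this is just the ZL property $I(U;S)=0$ made explicit). I would also note that the natural quantiser for an even source uses a symmetric secret distribution $p_s=p_{J-1-s}$; this forces $\sum_{s<t}p_s=\sum_{s>t}p_s=\tfrac{1-p_t}2$ and pins the grid relative to the origin, which is all the symmetry the computation needs.

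The geometric crux is then to locate $x=0$ (the quantile level $F=\tfrac12$) inside the grid. For $J$ even the symmetric boundaries put an interval edge exactly at $x=0$, so every $s$-interval is sign-pure and $V$ is a deterministic function of $S$ alone; since $S\perp U$ this gives $V\perp U$ and $H(V\mid U)=H(V)=1$. For $J=2t+1$ the middle interval $s=t$ is centred on the origin ($x=0$ sits at $\tilde u=\tfrac12$ because $F(0)-F(q_t)=p_t/2$), while all other intervals remain sign-pure. Within this middle interval I would then ask where $\tilde u=\tfrac12$ lands among the $m$ sub-intervals: if $m$ is even it is a sub-interval boundary, so every sub-interval is still sign-pure; if $m$ is odd it lies in the interior of the central sub-interval $u^\star=(m-1)/2$, which therefore straddles the origin. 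With the geometry fixed, the entropies fall out by computing $\mathbb P(V=+1\mid U=u)$ from $\mathbb P(S=s\mid U=u)=p_s$. In the sign-pure situations (all $u$ when $m$ is even, and $u\neq u^\star$ when $m$ is odd) one gets $\mathbb P(V=+1\mid U=u)\in\{\tfrac{1-p_t}2,\tfrac{1+p_t}2\}$ depending on which side of the origin that sub-interval lies, and since $h(\cdot)$ is symmetric about $\tfrac12$ each such $u$ contributes $h(\tfrac{1-p_t}2)$; averaging over uniform $U$ reproduces the $m$-even line. For $m$ odd the straddling sub-interval $u^\star$ is symmetric about $x=0$, so by evenness $\mathbb P(V=+1\mid S=t,U=u^\star)=\tfrac12$ and hence $\mathbb P(V=+1\mid U=u^\star)=\tfrac{1-p_t}2+\tfrac{p_t}2=\tfrac12$, giving $H(V\mid U=u^\star)=1$; averaging $\tfrac1m[(m-1)h(\tfrac{1-p_t}2)+1]$ yields the last line.

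The main obstacle is not any single computation but getting the bookkeeping exactly right: correctly identifying which single sub-interval (if any) contains the origin as a function of the parities of $J$ and $m$, and justifying that on that straddling piece the conditional sign is a perfect coin-flip. This last point rests on showing the sub-interval $u^\star$ is symmetric in $x$ about $0$ — i.e.\ that in $F$-coordinates it is the symmetric window $[\tfrac12-\tfrac{p_t}{2m},\tfrac12+\tfrac{p_t}{2m}]$ — which combined with $F(x)+F(-x)=1$ makes its positive and negative halves equiprobable. Everything else is the repeated use of $S\perp U$ and the symmetry identity $\sum_{s>t}p_s=\tfrac{1-p_t}2$.
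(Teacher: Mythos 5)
Your proof is correct and follows essentially the same route as the paper's: a case analysis on where $x=0$ falls relative to the quantisation and sub-interval grid, combined with the ZL fact $\Pr[S=s\mid U=u]=p_s$. You are in fact slightly more careful than the paper, which asserts $\Pr[V=1\mid U=u]=\sum_{n>t}p_n$ for every $u$ in the $m$-even case, whereas (as you note) the probability is $\tfrac{1+p_t}{2}$ for sub-intervals on the positive side — the conclusion survives only because $h$ is symmetric about $\tfrac12$, a point your write-up makes explicit.
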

\begin{proof}
When $J$ is even, it holds that $\Pr[V=1]=\frac12$ for any~$u$.
When $J$ is odd and $m$ is even, it holds for any $u$ that 
$\Pr[V=1|U=u]=\sum_{n>t}p_n=\frac12\sum_{n\neq t}p_n=\frac12(1-p_t)$.
When $J$ is odd and $m$ is odd, the above situation holds for $u\neq \lfloor m/2\rfloor$,
but for the one special value of $u$ in the middle we have
$\Pr[V=1|U=\lfloor\frac m2\rfloor]=\frac12$.
\end{proof}

Next we look at the threshold indicator~$Z$. 
The entropy of $Z$ is given by $H(Z)=h(F(-\tau)+1-F(\tau))$.
For symmetric $f(x)$ this reduces to $H(Z)=h(2F(-\tau))$.

Theorem~\ref{th:tausmall} below gives an expression for the entropy of $Z$ given
the helper data, in the regime where the threshold $\tau$ lies in the outermost 
$u$-region (the rightmost grayscale band in Fig.\,\ref{fig:stripes}), i.e., the
medical condition is rare.

\begin{theorem}
\label{th:tausmall}
Let $f(x)$ be a symmetric pdf.
Let $F(-\tau)<\frac{p_0}{m}$. Then
\begin{equation}
    H(Z|U)= \frac2m h\left(mF(-\tau)\right).
\end{equation}
\end{theorem}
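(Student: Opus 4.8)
The plan is to compute $H(Z|U)=\sum_u \Pr[U=u]\,H(Z|U=u)$ directly, exploiting two facts: that the equiprobable sub-division makes $U$ uniform, and that the hypothesis $F(-\tau)<p_0/m$ forces the entire `extreme' event $\{Z=1\}$ into just two of the $m$ helper-data slices. First I would record that $\Pr[U=u]=\sum_s p_s\cdot\frac1m=\frac1m$ for every $u\in\{0,\dots,m-1\}$, so the outer sum carries a uniform weight $1/m$.

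Next I would pin down where the two tails live. Since $Z=\Theta(|X_n|-\tau)$, we have $Z=1$ exactly when $X_n<-\tau$ or $X_n>\tau$. The leftmost slice, $S=0,\,U=0$, covers the cumulative band $F(x)\in[0,p_0/m)$; the assumption $F(-\tau)<p_0/m$ says precisely that the point $-\tau$ sits inside this band, so $\{X_n<-\tau\}\subseteq\{S=0,\,U=0\}$. By symmetry of $f$ we have $\Pr[X_n>\tau]=1-F(\tau)=F(-\tau)$, and (using the symmetric placement of the quantisation intervals, as in Theorem~\ref{th:leakRHDS}, so that $p_{J-1}=p_0$) the mirror argument puts $\{X_n>\tau\}$ entirely inside the rightmost slice $S=J-1,\,U=m-1$. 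Consequently, for every $u\notin\{0,m-1\}$ the value $Z$ is deterministically $0$ given $U=u$, whence $H(Z|U=u)=0$, and only the two boundary slices contribute.

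It then remains to evaluate the two surviving terms. For $u=0$ the event $\{X_n>\tau\}$ cannot occur (it forces $U=m-1\neq 0$ once $m\geq 2$), so $\Pr[Z=1,U=0]=\Pr[X_n<-\tau]=F(-\tau)$, and dividing by $\Pr[U=0]=\frac1m$ gives $\Pr[Z=1|U=0]=mF(-\tau)$; the hypothesis guarantees $mF(-\tau)<p_0\leq 1$, so this is a legitimate probability and $H(Z|U=0)=h(mF(-\tau))$. The slice $u=m-1$ is handled by the identical computation via symmetry, yielding $H(Z|U=m-1)=h(mF(-\tau))$. Adding the two contributions with weight $1/m$ each gives $H(Z|U)=\frac1m\big(h(mF(-\tau))+h(mF(-\tau))\big)=\frac2m h(mF(-\tau))$, as claimed.

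The main obstacle I anticipate is the confinement step: one has to argue cleanly, in terms of the cumulative coordinate $F(x)$ restricted to each interval, that the threshold crossings at $\pm\tau$ fall inside the outermost sub-intervals and nowhere else. This is where the precise form of the hypothesis $F(-\tau)<p_0/m$ and the symmetric layout of the quantisation cells (giving $p_{J-1}=p_0$ and $\Pr[X_n>\tau]=F(-\tau)$) are essential; everything after that reduces to a two-term entropy sum.
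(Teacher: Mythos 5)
Your proposal is correct and follows essentially the same route as the paper's own proof: observe that the hypothesis confines the two tails to the slices $U=0$ and $U=m-1$, so $H(Z|U=u)=0$ for the interior $u$, then use uniformity of $U$ and symmetry to reduce to $\frac2m h(\Pr[Z=1|U=0])$ with $\Pr[Z=1|U=0]=mF(-\tau)$. You merely spell out the confinement step (and the implicit symmetric layout giving $p_{J-1}=p_0$) in more detail than the paper does.
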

\begin{proof}
For $u\in\{1,\ldots,m-2\}$ it is certain that $Z=0$.
This gives $H(Z|U)=\frac1m\sum_{w=0}^{m-1}H(Z|U=u)=\frac1m[H(Z|U=0)+H(Z|U=m-1)]$.
Due to symmetry this equals $\frac2m H(Z|U=0)$.
Since $Z$ is a binary RV we have $H(Z|U=0)=h(\Pr[Z=1|U=0])$.
Finally we use $\Pr[Z=1|U=0] = p_0\cdot\frac{F(-\tau)}{p_0/m}$.
\end{proof}

\underline{Remark}:
For $m=2$ we see that $H(Z|U)=H(Z)$, i.e.,\,there is no leakage.

It is interesting to note that {\bf in the HDS with even J and m=2,
the helper data leaks absolutely nothing about V and Z}.

\vskip2mm

One may wonder if $m>2$ is still worth considering.
It should be noted that
the noise resilience of the HDS improves when the number of subdivisions $m$ is increased.
Therefore we cannot exclude that setting $m>2$ can be a good design choice.
Theorem~\ref{th:leakZcontW} below gives a leakage result for the limiting case
$m\to\infty$, i.e.\,the continuum helper data.

\begin{theorem}
\label{th:leakZcontW}
Let $f(x)$ be an even function. Let $F(-\tau)\leq p_0$.
\begin{eqnarray}
    && \!\!\!\!\!\!\!\!\!\!\!\! H(Z|\tilde U)=
    \nonumber\\ &&  \!\!\!\!\!\!\!\!\!\!\!\!
    \left\{\begin{matrix}
    F(-\tau)\leq\frac{p_0}2: &   \frac{2F(-\tau)}{p_0}h(p_0)
    \cr 
    F(-\tau)>\frac{p_0}2: & 
    \frac{2p_0-2F(-\tau)}{p_0}h(p_0)+\frac{2F(-\tau)-p_0}{p_0}h(2p_0).
    \end{matrix}\right.
    \quad\quad
\end{eqnarray}
\end{theorem}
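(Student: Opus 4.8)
The plan is to exploit the Zero Leakage structure of the discretising HDS in the continuum limit. Since the HDS is Zero Leakage, $\tilde U$ is independent of $S$ and uniform on $[0,1)$, and conditioning on $S=s$ together with $\tilde U=\tilde u$ pins down the value of $X_n$ through $F(x)=F(q_s)+\tilde u\,p_s$. I would therefore first write $H(Z|\tilde U)=\int_0^1 h\big(\Pr[Z=1\mid \tilde U=\tilde u]\big)\,d\tilde u$, using that $\tilde U$ is uniform and $Z$ is binary. The whole problem then reduces to evaluating the single function $\pi(\tilde u):=\Pr[Z=1\mid \tilde U=\tilde u]=\sum_{s=0}^{J-1}p_s\,\mathbf 1\big[\,|x(s,\tilde u)|>\tau\,\big]$, where $x(s,\tilde u)$ is the value determined by $F(x(s,\tilde u))=F(q_s)+\tilde u\,p_s$.

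Next I would show that only the two outermost quantisation intervals can contribute to $\pi(\tilde u)$. The assumption $F(-\tau)\le p_0$ places $-\tau$ inside the $s=0$ interval, and by symmetry of $f$ (hence $p_{J-1}=p_0$ and $q_1=-q_{J-1}$, together with $q_{J-1}\le\tau$) it places $\tau$ inside the $s=J-1$ interval; every middle interval $1\le s\le J-2$ then lies entirely within $[-\tau,\tau]$, forcing $Z=0$ there. For $s=0$ the event $x<-\tau$ translates, through $F(x)=\tilde u\,p_0$, into $\tilde u<a$ with $a:=F(-\tau)/p_0$; by symmetry the $s=J-1$ interval gives $Z=1$ exactly when $\tilde u>1-a$. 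This yields the clean form $\pi(\tilde u)=p_0\big(\mathbf 1[\tilde u<a]+\mathbf 1[\tilde u>1-a]\big)$. Establishing that the middle intervals contribute nothing, i.e.\ carefully combining the symmetric placement of the interval boundaries with $F(-\tau)\le p_0$, is the one step that needs genuine care; everything afterwards is bookkeeping.

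Finally I would split on whether the two indicator windows $[0,a)$ and $(1-a,1)$ overlap, equivalently on whether $F(-\tau)\le p_0/2$ or $F(-\tau)>p_0/2$. When $F(-\tau)\le p_0/2$ the windows are disjoint, so $\pi$ equals $p_0$ on a set of measure $2a=2F(-\tau)/p_0$ and $0$ elsewhere, giving $H(Z|\tilde U)=\tfrac{2F(-\tau)}{p_0}h(p_0)$. When $F(-\tau)>p_0/2$ the windows overlap on $(1-a,a)$, where both outer intervals trigger $Z=1$ so that $\pi=2p_0$, while on the remaining set $\pi=p_0$; integrating $h(\pi)$ over the measures $2a-1$ and $2(1-a)$ of these two regions and substituting $a=F(-\tau)/p_0$ reproduces the second branch. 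This matches the claimed expression and completes the plan.
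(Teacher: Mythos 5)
Your proposal is correct and follows essentially the same route as the paper's own proof: write $H(Z|\tilde U)=\int_0^1 h(\Pr[Z=1\mid\tilde U=\tilde u])\,d\tilde u$ using uniformity of $\tilde U$, observe that only the two outermost quantisation intervals can produce $Z=1$, and split on whether the two tail windows in $\tilde u$ overlap. You merely spell out more explicitly (via $F(x)=F(q_s)+\tilde u\,p_s$ and the symmetry $p_{J-1}=p_0$) why the middle intervals contribute nothing, which the paper leaves implicit.
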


\begin{proof}
We compute $H(Z|\tilde U)={\mathbb E}_{\tilde u} H(Z|\tilde U=\tilde u)$.
Since $\tilde u$ is uniform on $[0,1)$ this evaluates to
$\int_0^1 \mathrm{d} \tilde u\; H(Z|\tilde U=\tilde u)=\int_0^1 \mathrm{d} \tilde u\; h(\Pr[Z=1|\tilde U=\tilde u])$.

\underline{Case $F(-\tau)/p_0\leq\frac12$}.
For $p_0 \tilde u\in[F(-\tau,p_0-F(-\tau))]$ it is certain that $Z=0$.
For all other $\tilde u$ we have $\Pr[Z=1|\tilde U=\tilde u]=p_0$.

\underline{Case $F(-\tau)/p_0>\frac12$}.
For $p_0 \tilde u\in[p_0-F(-\tau), F(-\tau)]$ we have $\Pr[Z=1|\tilde U=\tilde u]=2p_0$ (left tail and right tail).
For all other $\tilde u$ the probability is~$p_0$.
\end{proof}

The relative leakage $[H(Z)-H(Z|\tilde U)]/H(Z)$ is plotted in Fig.\,\ref{fig:leakHDS}.
The dependence on $\tau$ looks strange, but three special points can be understood.
(i) For $\tau\to\infty$ the continuum helper data completely reveals~$Z$;
(ii) For $F(-\tau)=p_0$ the helper data contains no information about $|x|$ crossing the threshold.
That information is contained in~$S$, and the ZLHDS has been designed not to leak anything about~$S$;
(iii) At the intermediate value $F(-\tau)/p_0=1/2$ a special symmetry occurs between the left region
$S=0$ and the right region $S=J-1$. 
Due to this symmetry the conditional distribution $Z|\tilde U=\tilde u$ looks the same for every value of $\tilde u$. 

Furthermore the leakage is a decreasing function of~$J$, since at large $J$ the two tail regions
$S=0$ and $S=J-1$ have less influence.

\begin{figure}[!t]
\centering
\includegraphics[scale=0.33]{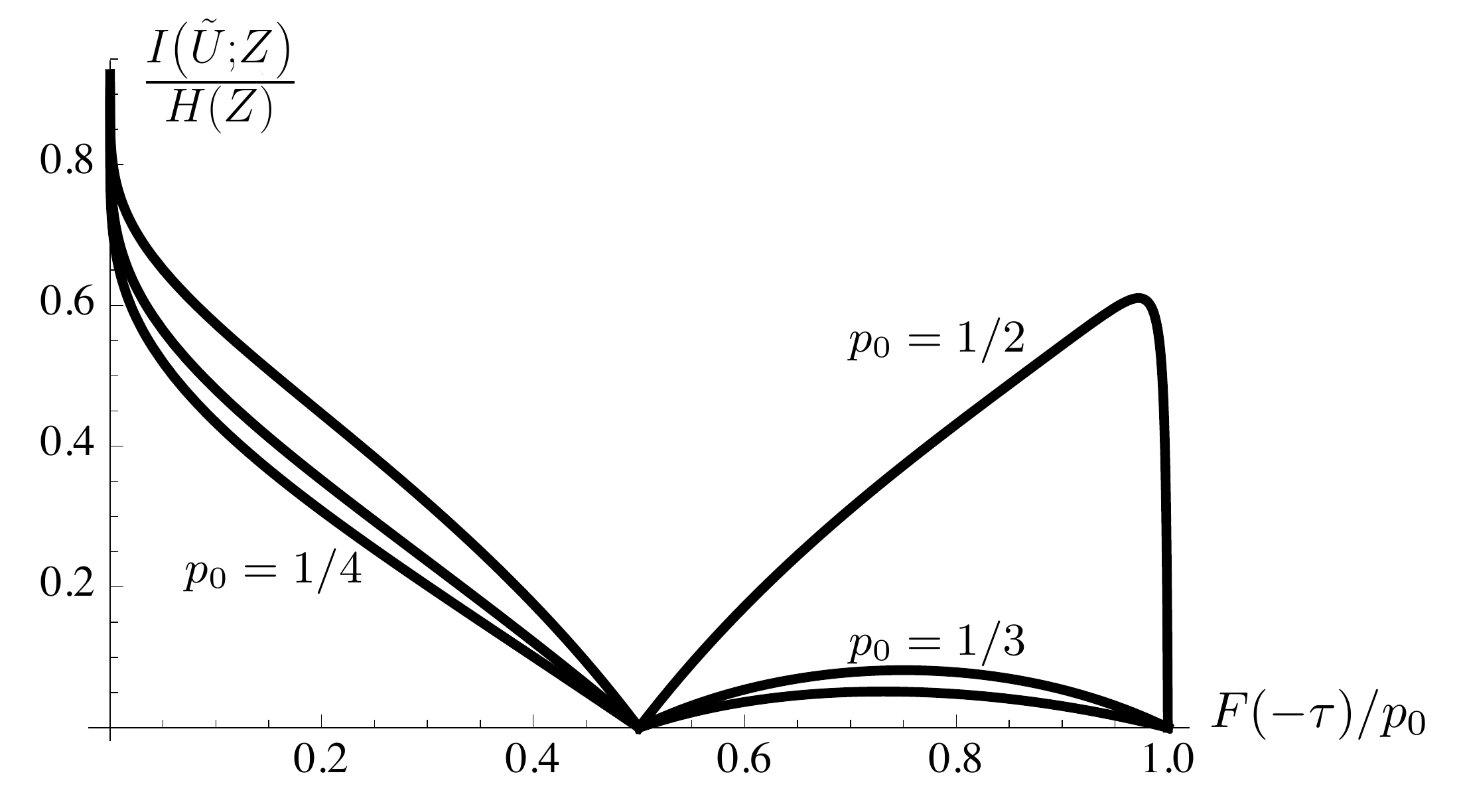}
\caption{\it Normalised leakage $\frac{I(\tilde{U};Z)}{H(Z)}$ as a function of $F(-\tau)/p_0$.
From top to bottom $p_0=\frac12,\frac13,\frac14$.}
\label{fig:leakHDS}
\end{figure}

\subsection{Leakage from the Code Offset Method}
\label{sec:leakCOM}

Even if the quantising HDS does not reveal $V$, the helper data from the Code Offset Method
may still do.
Typically the sign $V\in\{-1,1\}$ gets turned into a bit value somewhere in the binary string
that serves as input to the COM.

We use notation as in Section~\ref{sec:COM}.
Consider {\em uniform} $\psi( X) \in\{0,1\}^N$, and helper data $U={\tt Syn}\, \psi( X)$.
At fixed $u\in\{0,1\}^{N-k}$, there are $2^k$ strings $x$ that are consistent with $u$,
and all of them have equal probability. 
The marginal distribution for one component of $\psi( X)$ is uniform; hence there is no leakage about~$V$.

The above reasoning no longer applies when $\psi(X)$ is non-uniform. 
Then the $2^k$ strings compatible with $u$ are not uniform.
However, the number of strings summed over in the computation of the marginal
is exponentially large; for most `normal-looking' $\psi(X)$-distributions the 
marginal of one component will still be close to uniform.

Finally we briefly depart from the {\em perfect enrollment} setting and investigate the effect of
measurement noise at enrollment time on the privacy properties of the COM.
Suppose there exists a `true' biometric $B\in\{0,1\}^N$, which is shielded from our view by enrollment noise~$G$.
The enrollment measurement yields $B\oplus G$.
Then the relevant privacy question is how much is leaking about (parts of) $B$,
as opposed to~$B\oplus G$.

\begin{proposition}
Let the enrollment noise $G$ be bitwise iid Bernoulli noise with bit error rate $\varepsilon$.
Let $r$ be the row weight of the error correcting code. 
Let $U={\tt Syn}(B\oplus G)$.
Then
\begin{equation}
    I(B;U) \approx 
    (N-k)[1-h\left(\frac12-\frac12(1-2\varepsilon)^r\right)].
\end{equation}
\end{proposition}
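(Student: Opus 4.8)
The plan is to exploit the linearity of the syndrome map and then split the mutual information into a source term and a noise term. Since the error-correcting code is linear, ${\tt Syn}$ is an $\mathbb{F}_2$-linear map, so I would write ${\tt Syn}(\mathbf w)=\mathbf H\mathbf w$ for a parity-check matrix $\mathbf H\in\{0,1\}^{(N-k)\times N}$ (distinct from the entropy symbol $H(\cdot)$), each of whose rows has weight $r$. Linearity then gives $U=\mathbf H(B\oplus G)=\mathbf H B\oplus \mathbf H G$, which cleanly separates the contribution of the true biometric from that of the enrollment noise.

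Next I would expand $I(B;U)=H(U)-H(U|B)$. Because $G$ is independent of $B$ and $\mathbf H B$ is a deterministic function of $B$, conditioning on $B=b$ turns $U$ into the fixed translate $\mathbf H b\oplus\mathbf H G$; XOR-ing by a constant leaves entropy unchanged, so $H(U|B=b)=H(\mathbf H G)$ for every $b$, and hence $H(U|B)=H(\mathbf H G)$. This yields the exact identity $I(B;U)=H(U)-H(\mathbf H G)$, reducing the problem to evaluating two entropies.

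For the first term I would invoke the same near-uniformity heuristic used for the noiseless COM in Section~\ref{sec:leakCOM}: for a normal-looking, high-entropy source $B$ the syndrome $\mathbf H B$, and therefore $U$, is close to uniform on $\{0,1\}^{N-k}$, so $H(U)\approx N-k$. For the noise term I would apply the piling-up lemma coordinate by coordinate: each entry of $\mathbf H G$ is the XOR of $r$ independent $\mathrm{Bernoulli}(\varepsilon)$ bits of $G$, so its probability of equaling $1$ is exactly $\tfrac12-\tfrac12(1-2\varepsilon)^r$; call this $\delta$. Each syndrome-noise bit is thus $\mathrm{Bernoulli}(\delta)$ with marginal entropy $h(\delta)$. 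Treating the $N-k$ coordinates of $\mathbf H G$ as approximately independent gives $H(\mathbf H G)\approx(N-k)\,h(\delta)$, and subtracting the two terms produces the stated $I(B;U)\approx(N-k)[1-h(\delta)]$.

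The main obstacle is justifying the two $\approx$ signs, both of which are independence/uniformity assumptions rather than exact facts. The marginal steps are rigorous — linearity makes $H(U|B)=H(\mathbf H G)$ exact, and the piling-up lemma gives the exact per-coordinate bias $\delta$ — but the factorizations $H(U)\approx N-k$ and $H(\mathbf H G)\approx(N-k)h(\delta)$ both degrade when rows of $\mathbf H$ overlap heavily, since shared columns correlate distinct syndrome coordinates. The genuinely hard part is therefore bounding the pairwise and higher-order correlations between syndrome bits induced by overlapping row supports and showing they contribute only lower-order terms; for a code with sparse, low-overlap parity checks these correlations are negligible and the joint entropy factorizes to leading order, which is precisely the regime that the $\approx$ in the statement encodes.
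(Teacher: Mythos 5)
Your proof is correct and follows essentially the same route as the paper's: the same decomposition $I(B;U)=H(U)-H({\tt Syn}\,G)$ via linearity of the syndrome map, the same replacement $H(U)\approx N-k$, and the same per-bit noise entropy $h\left(\frac12-\frac12(1-2\varepsilon)^r\right)$ — your ``piling-up lemma plus approximate independence of syndrome bits'' is exactly the Gallager bound the paper cites. Your explicit discussion of when the factorization degrades (overlapping parity-check rows) is a useful addition the paper leaves implicit.
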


\begin{proof}
$I(B;U)=H(U)-H(U|B)=H(U)-H({\tt Syn}\,B \oplus {\tt Syn}\,G |B)$
$=H(U)-H({\tt Syn}\,G)$.
We use $H(U)\leq N-k$.
In a good code the redundancy $N-k$ is just slightly larger than the
entropy of the syndrome ${\tt Syn}\,G$.
We estimate the entropy of ${\tt Syn}\,G$ by setting it close to the Gallager bound \cite{Gallager1963},
$H({\tt Syn}\,G)\approx (N-k)h(\alpha)$, where
$\alpha$ is the bit error probability of $r$ binary symmetric channels concatenated together,
$\alpha=\frac12-\frac12(1-2\varepsilon)^r$.
\end{proof}

For reasonable values of the row weight, we see that the leakage $I(B;U)$
is very small, approximately
$(N-k)\frac{(1-2\varepsilon)^{2r}}{2\ln2}$; 
and this is leakage about the whole vector~$B$.

\section{Results for Sparse Coding with Ambiguation}
\label{sec:resultsSCA}

\begin{figure}[!t]
\centering
\includegraphics[scale=0.3]{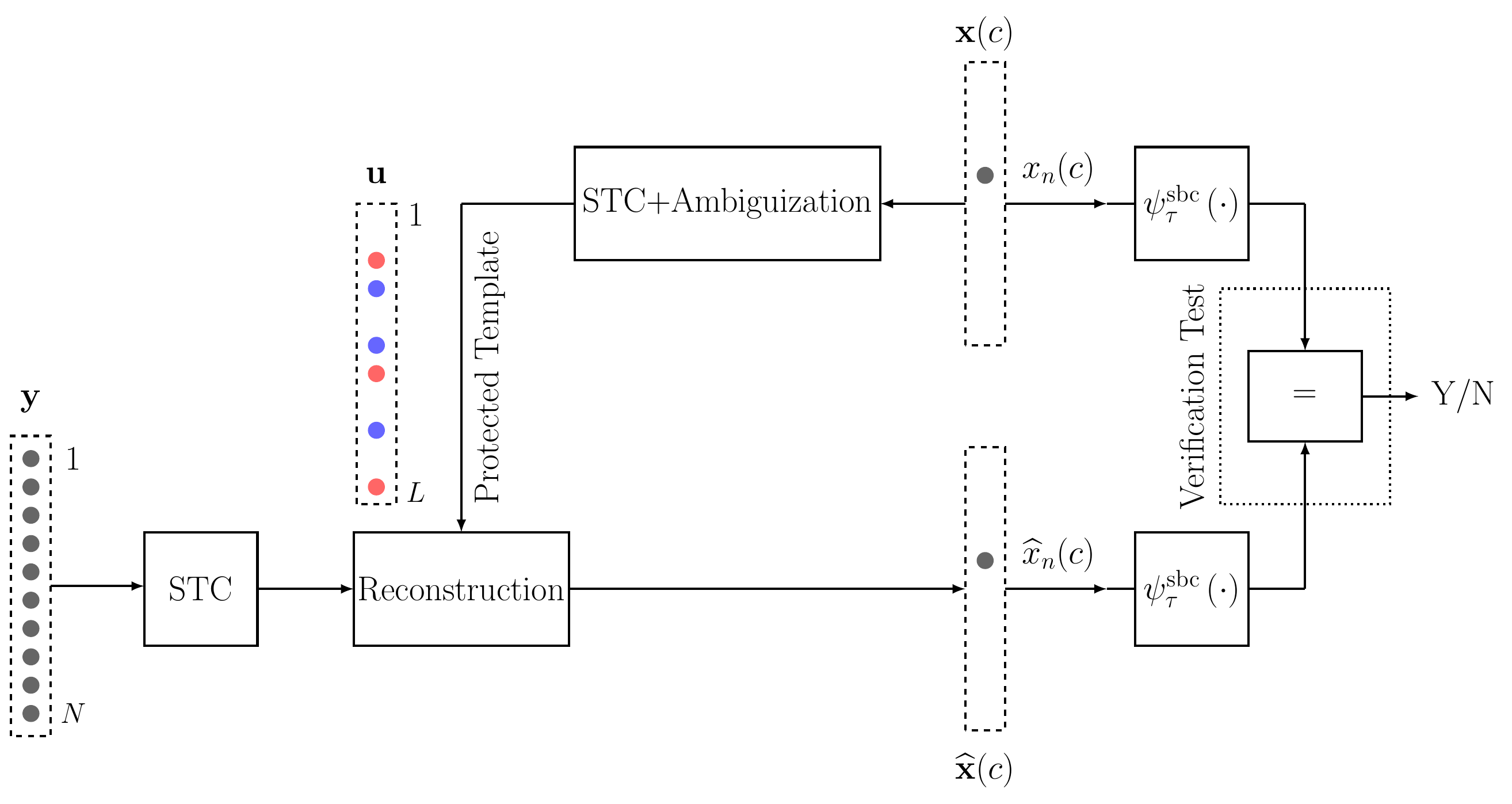}
\caption{\it The schematic block diagram of a physical verification system.}
\label{Fig:TestPhase}
\end{figure}

\subsection{Method}

We consider $C$ users. Each user $c \in [C]$ has a measurement vector $\mathbf x(c)$;
it consists of $N$ components which are modeled as zero-mean unit-variance Gaussian variables.
The enrolment of the vectors $\mathbf{x}(1), \ldots,\mathbf{x}(C)$
is as described in Section~\ref{sec:ambigu}.
The public data for user $c$ is $\mathbf W(c),\mathbf u(c)$.

 \begin{figure}[!t] 
      \centering%
    \begin{subfigure}[b]{0.23\textwidth}
     \includegraphics[width=4.66cm, height=3.55cm]{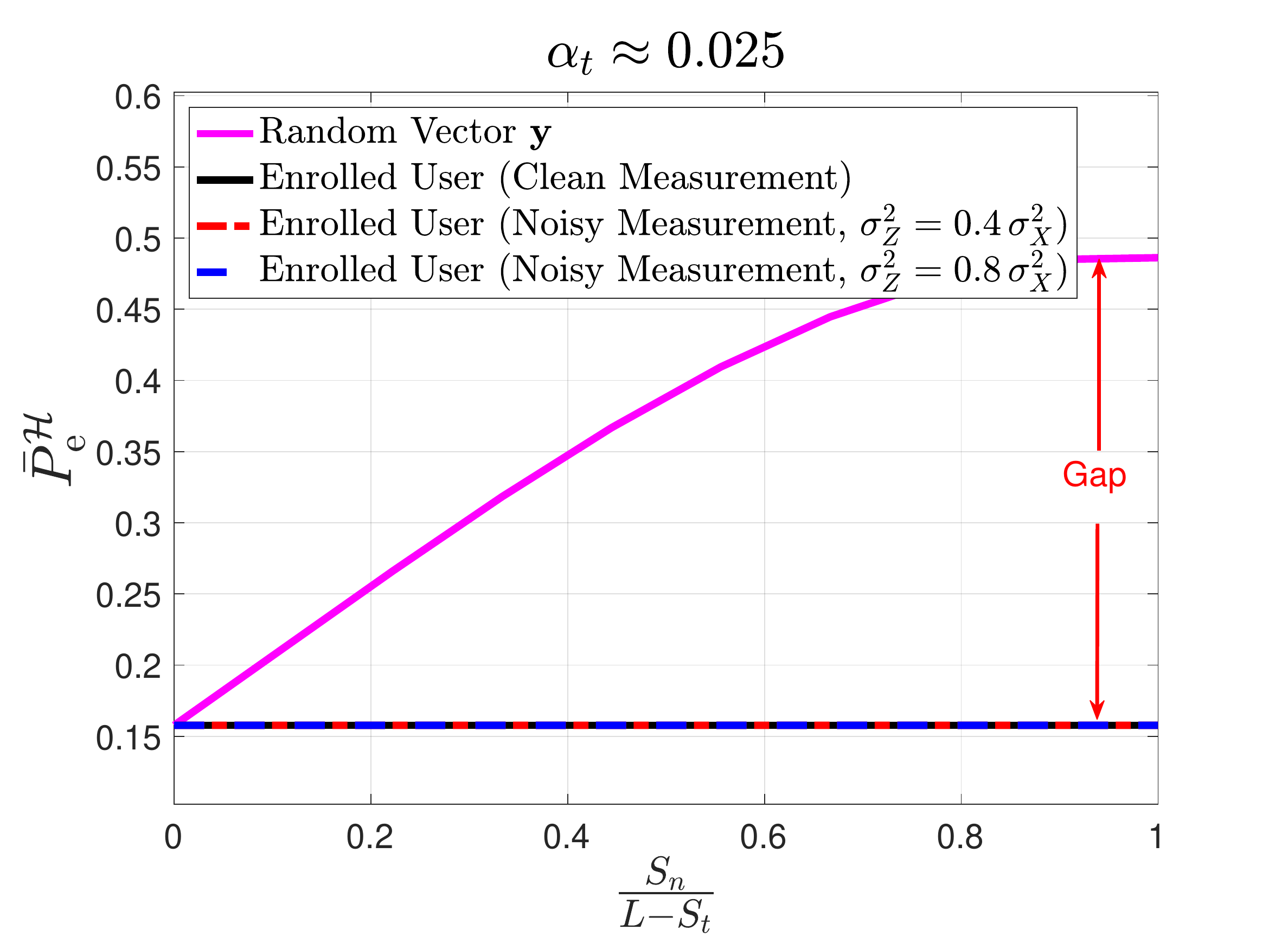}
        \vspace{-17pt}
         \caption{ }
         \label{fig:AveProbebilityError1}
    \end{subfigure}
    ~
    \begin{subfigure}[b]{0.23\textwidth}
        \includegraphics[width=4.65cm, height=3.5cm]{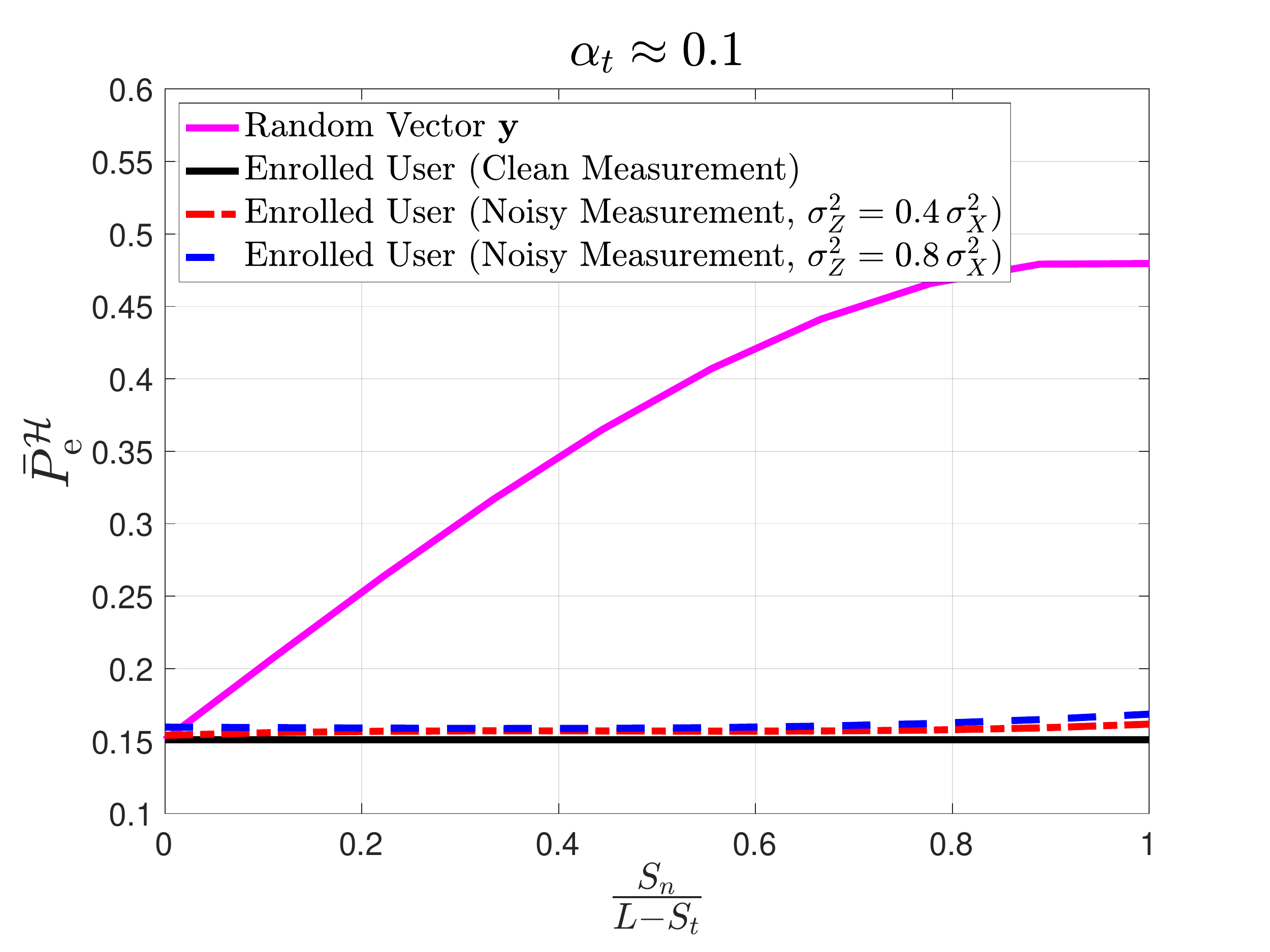}
        \vspace{-17pt}
         \caption{ }
        \label{fig:AveProbebilityError2}
    \end{subfigure}
    
     \begin{subfigure}[b]{0.23\textwidth}
        \includegraphics[width=4.65cm, height=3.5cm]{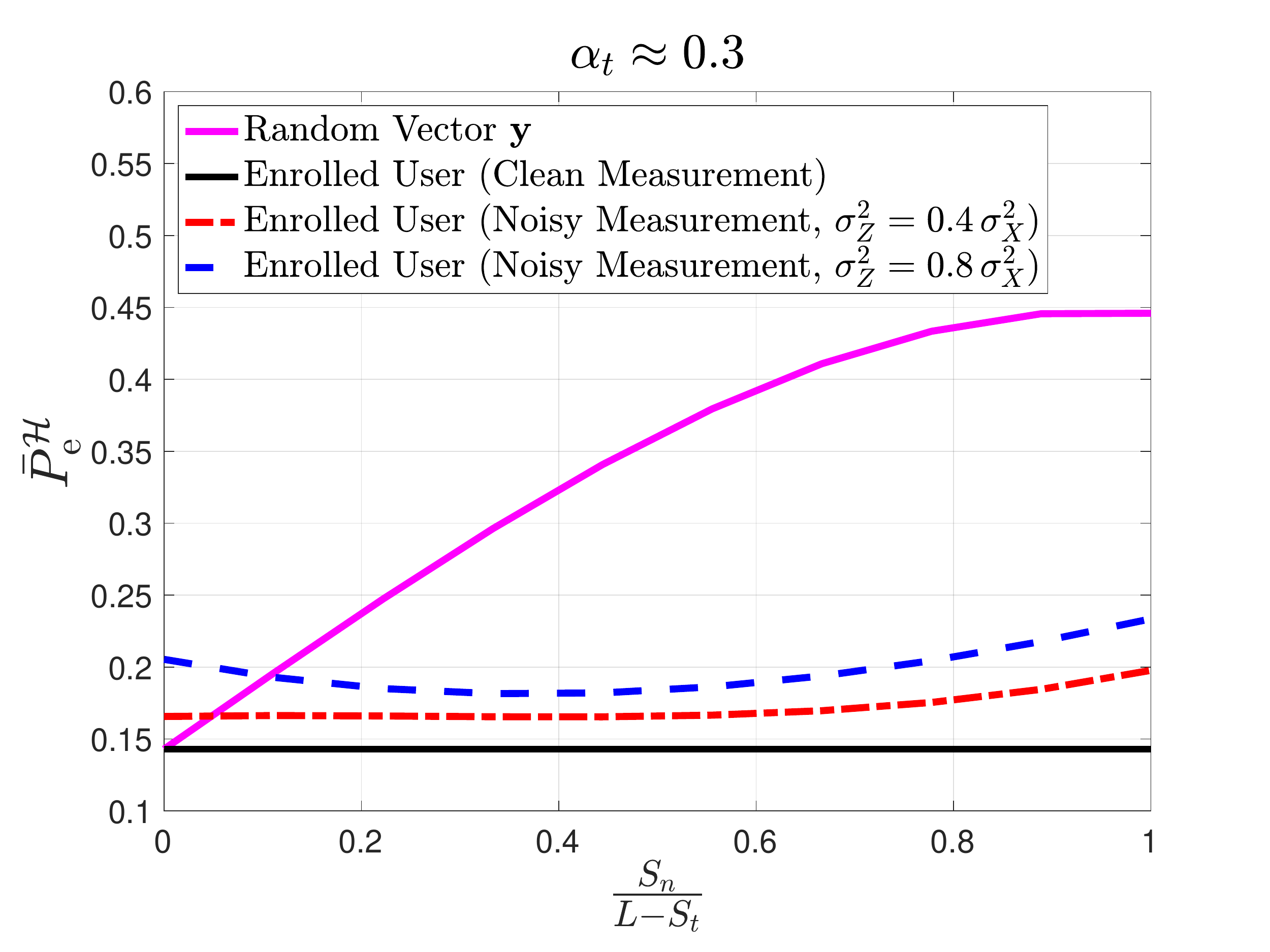}
        \vspace{-17pt}
         \caption{}
        \label{fig:AveProbebilityError3}
    \end{subfigure}
    ~ 
    \begin{subfigure}[b]{0.23\textwidth}
        \includegraphics[width=4.65cm, height=3.5cm]{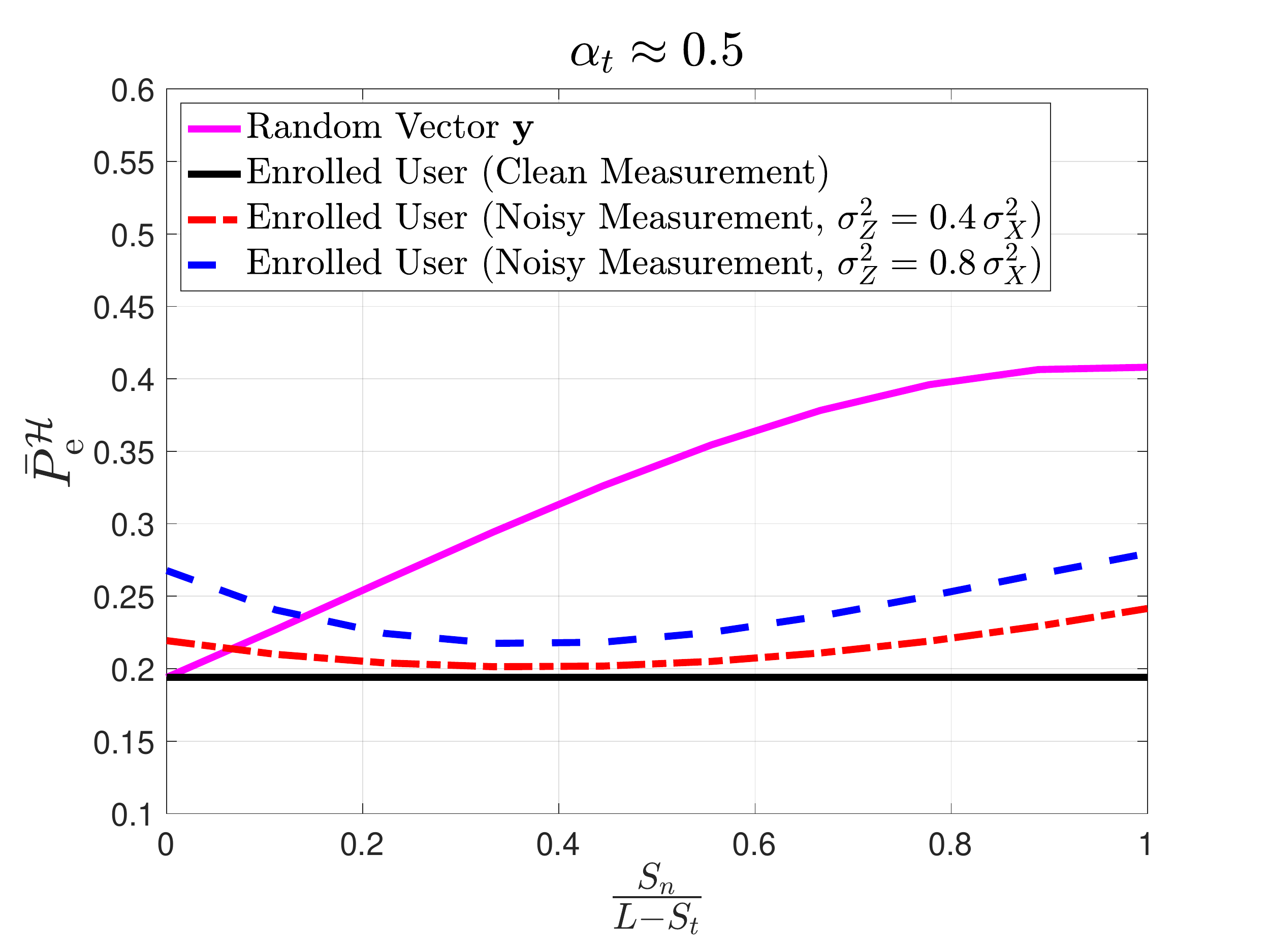}
        \vspace{-17pt}
         \caption{}
        \label{fig:AveProbebilityError4}
    \end{subfigure}
     \caption{\it Error probabilities for different sparsity ratios $\alpha_{\rm t}$ and measurement noise $\sigma^2_Z$;  setting $\sigma^2_X = 0.5$, $ \tau = \sigma_X$.}
 \label{Fig:AveProbebilityOfError}
\end{figure}

We consider a verification vector $\mathbf y\in\mathbb R^N$, which is allegedly from user~$c$.
It is either a noisy version
of the enrolled $\mathbf x(c)$ or completely unrelated to it
(but drawn from the same distribution).
The former case is referred to as Hypothesis ${\cal H}_1$, the latter as ${\cal H}_0$.
We write
$\mathbf{y}^{\rm auth}(c) = \mathbf{x}(c) + \mathbf{z}$, with Gaussian noise
$\mathbf{z} \sim \mathcal{N} \left( \mathbf{0}, \sigma_Z^2 \mathbf{I} \right)$.
The verification procedure works as follows.
The verifier computes
$\mathbf{u}(c)\!~\!\cdot~\psi^{\rm stc}_\lambda(\mathbf W(c)\mathbf y)$.
If this inner product exceeds a threshold, then he decides on ${\cal H}_1$, otherwise ${\cal H}_0$.
The expression $\psi^{\rm stc}_\lambda(\mathbf W(c)\mathbf y)$ is essentially
an STC `enrollment' of $\mathbf y$ without ambiguation noise.
Taking the inner product with $\mathbf u(c)$ is meant to remove the ambiguation noise from $\mathbf u(c)$
(we call this `purification'),
and it results in a similarity score.

The goal of the SCA mechanism is to prevent recovery of 
$\mathbf{x}$ and the corresponding $\psi(x_n)$ from the enrolment data 
while enabling a verifier to check if $\mathbf y$ is consistent with enrolment data $\mathbf u(c)$.
We can characterize the performance of SCA in terms of: 

(i) preservation of mutual information between $\mathbf X$ and $\widehat{\mathbf X}$ in the authorized case, 
i.e. 
$I ( \mathbf X ; \widehat{\mathbf X}  |  \mathcal{H}_1  ) \rightarrow H ( \mathbf X )$,
whilst in the unauthorized case
$I ( \mathbf X ; \widehat{\mathbf X}   |  \mathcal{H}_0 ) \rightarrow 0$. 
The same holds for a function of $\mathbf X$, i.e., 
$I( \psi (  X_n  ); \psi  ( \widehat{X}_n ) |  \mathcal{H}_1 )  
\rightarrow H ( \psi ( X_n) )$
and 
$I \big( \psi  ( X_n  ); \psi ( \widehat{X}_n )  |  \mathcal{H}_0 \big)  \rightarrow  0$. 

(ii) Reconstruction error: 
We investigate the error probabilities 
\begin{equation}
    P_{\mathrm{e}}^{\mathcal{H}}(c) = \Pr [ \psi^{\rm sbc}_\tau  ( \widehat X_n(c) ) 
    \neq \psi^{\rm sbc}_\tau ( X_n(c) )  | \mathcal{H} ]. 
\end{equation}
Let $p\triangleq {\rm Pr}[\psi^{\rm sbc}_\tau(X_n)=1]$, $p<\frac12$.
Ideally it should hold that $P_{\rm e}^{{\cal H}_1}\rightarrow 0$ and 
$P_{\rm e}^{{\cal H}_0}\rightarrow 2p(1-p)$. 
The latter expression is for random $\widehat X_n$ independent of $X_n$, with the same distribution.


\subsection{Performance results}
\label{Subsec:PerformanceAnalysis}

\begin{figure}[!t] 
      \centering%
      \hspace{-10pt} 
    \begin{subfigure}[b]{0.23\textwidth}
     \includegraphics[width=4.45cm, height=3.5cm]{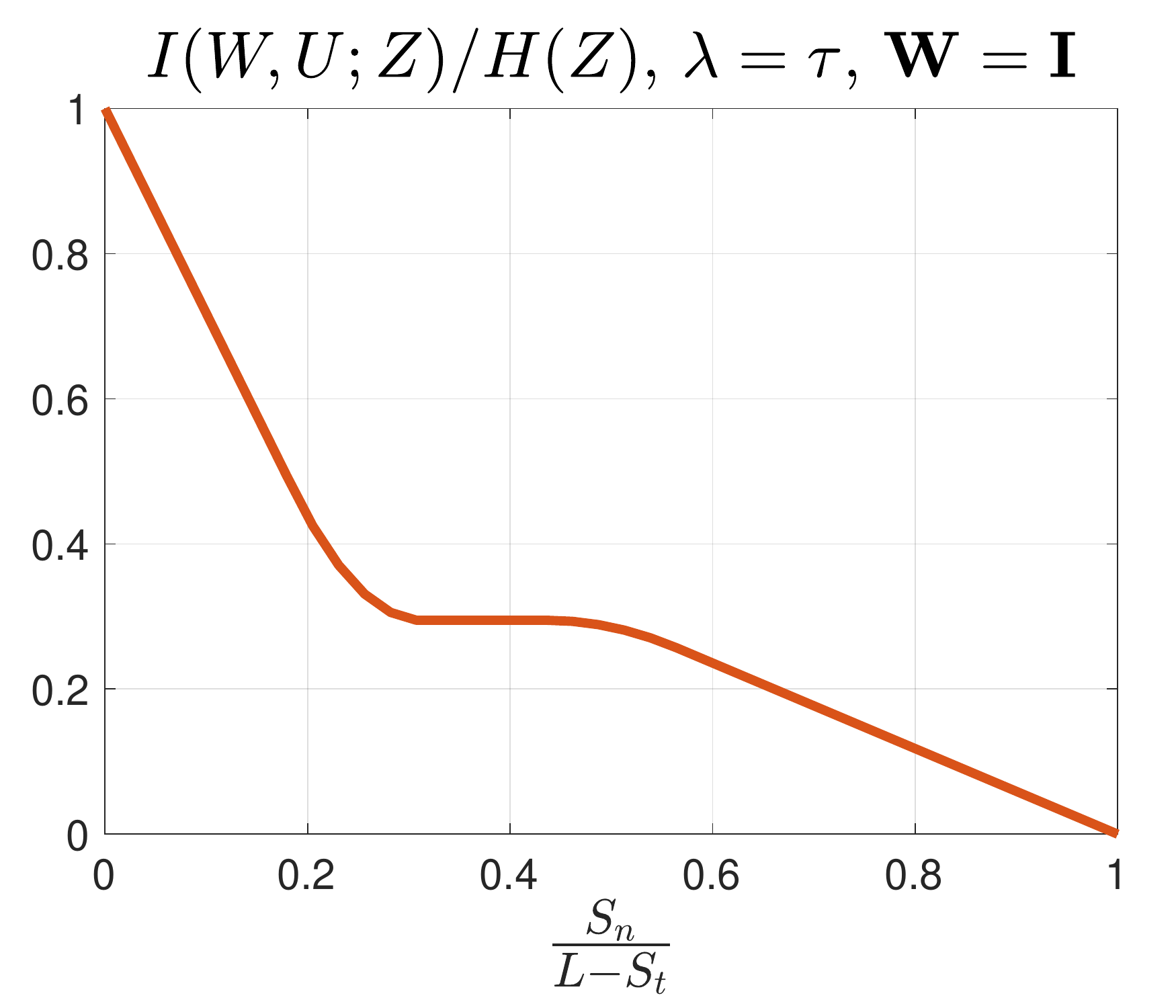}
        \vspace{-17pt}
         \caption{ }
         \label{fig:InfoLeakageMaximal}
    \end{subfigure}
    ~
    \begin{subfigure}[b]{0.23\textwidth}
        \includegraphics[width=4.45cm, height=3.5cm]{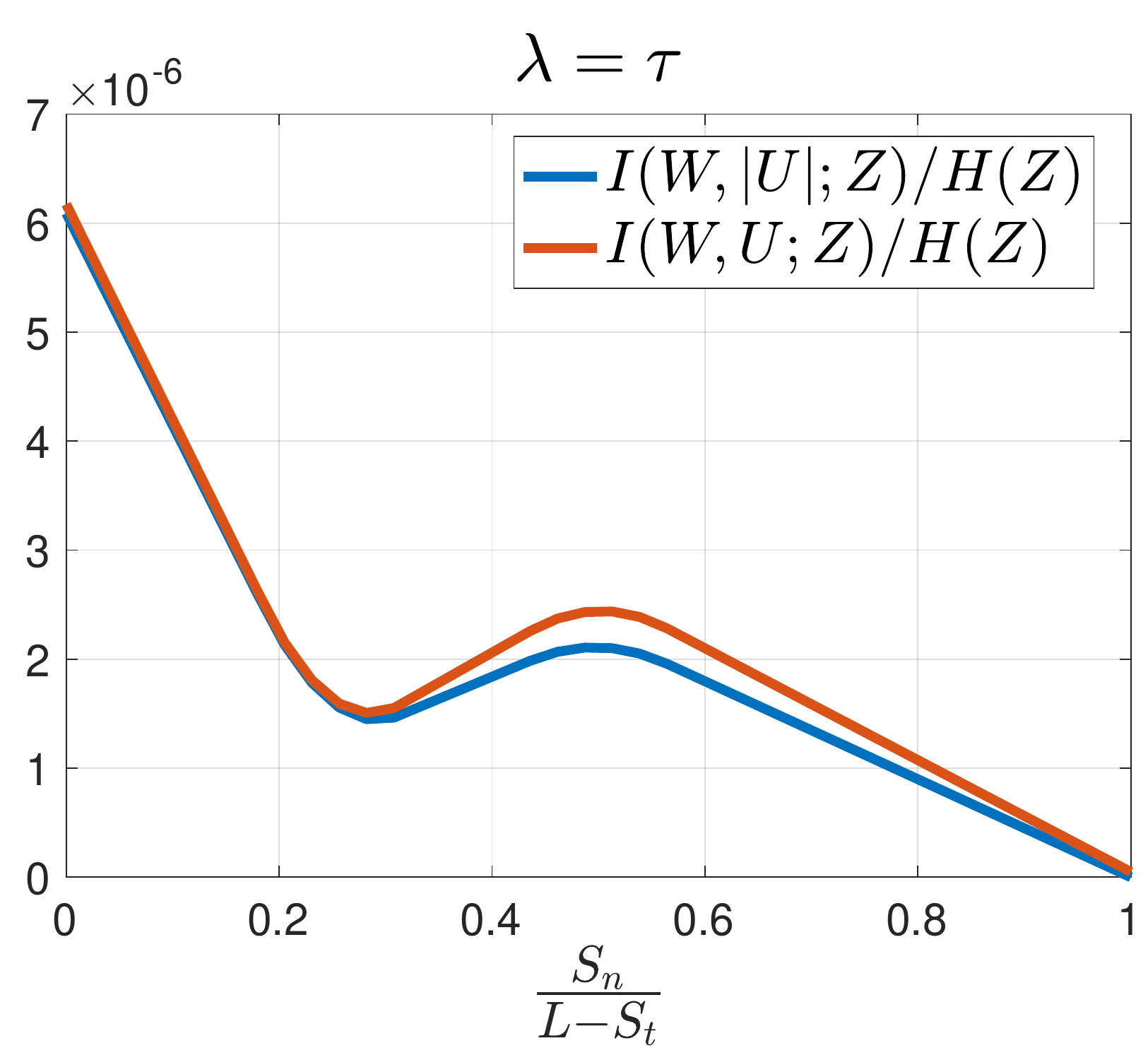}
        \vspace{-17pt}
         \caption{ }
        \label{fig:InfoLeakageWnotIdentity}
    \end{subfigure}
     \caption{\it Normalized Information Leakage $\frac{I(WU; Z)}{H(Z)}$ as a function of ambiguation ratio $\frac{S_n}{L - S_t}$. $\lambda=\tau=\sigma_X$.
     {\bf (a)} $\mathbf{W}=\mathbf I$; 
     {\bf (b)} $\mathbf{W}$ equals the PCA transform of the matrix $[ \mathbf x(1)\cdots\mathbf x(C)]$.}
 \label{Fig:InformationLeakage}
\end{figure}

We consider a database $\mathbf{X}$ of $C =100,000$ random vectors (individuals) with dimensionality $N = 256$, which are
generated from the distribution $\mathcal{N} (\mathbf{0}, \sigma^2_X \mathbf{I})$. 
We then generate the noisy version of $\mathbf{X}$ with two different 
noise variances $\sigma^2_Z = 0.4 \sigma^2_X$ and $\sigma^2_Z = 0.8 \sigma^2_X$. 
We consider square matrix $\mathbf W$, 
i.e., $L = N$.

We look at the error probability averaged over the users and the components,
$
\overline {P}_{\mathrm{e}}^{\mathcal{H}} =  \frac{1}{CN} \sum_{n=1}^N \sum_{c=1}^C P_{\mathrm{e}}^{\mathcal{H}}(c)
$.

Fig.~\ref{Fig:AveProbebilityOfError} shows the averaged error probability $\overline{P}_{e}^{\mathcal{H}}$ 
as a function of the ambiguation ratio $\frac{S_n}{L  - S_t}$, 
for fixed $\sigma_X^2 = 0.5$, $\tau = \sigma_X$ 
and different sparsity ratios $\alpha_t \triangleq S_t/L = 0.025, 0.1, 0.3, 0.5$ and measurement noise variances $\sigma_Z^2 = 0.4 \sigma_X^2, 0.8 \sigma_X^2$. 
Several things are worth noting.
\begin{itemize}
\item 
For the un-enrolled case (random $\mathbf y$), the error probability in guessing the privacy bit $Z$ increases as a function
of the ambiguation ratio. This is as expected.
\item
In the genuine user case the situation is more complex; the ambiguation noise interferes with the 
measurement noise.
\item
There is a clear gap between the genuine user case and the un-enrolled case.
The low\footnote{
Note that the False Negative probability for the overall user matching is much lower than
the single-component reconstruction error.
} (and in some plots nearly constant) error rate for genuine users demonstrates that the `purification' correctly removes
the ambiguation noise. 
\end{itemize}

Furthermore
we compute the leakage  $I ( \mathbf W , \mathbf U ; Z ) / H(Z)$. 
Fig.\,\ref{fig:InfoLeakageMaximal}
shows what happens when $\mathbf W$ is set to the trivial value $\mathbf W=\mathbf I$.
The leakage decreases from 100\% to zero with increasing ambiguation.
The curve seems to consist of three ambiguation ratio regimes, with piecewise linear behaviour:
$0-0.25$, $0.25-05$ and $0.5-1$.
At the moment we are not able to explain this behaviour.

Fig.\,\ref{fig:InfoLeakageWnotIdentity} shows what happens when a less trivial matrix $\mathbf W$
is used, namely the PCA transform matrix of the matrix $[\mathbf x(1)\cdots \mathbf x(C)]$.
This same $\mathbf W$ is used for all users.
We again observe piecewise linear behaviour with the same three intervals.
However, the middle piece is no longer constant but increasing.
More importantly, the leakage is reduced by orders of magnitude.
Fig.\,\ref{fig:InfoLeakageWnotIdentity} also shows the leakage from Sparse Binary Coding with ambiguation;
it is slightly smaller than for the ternary case.

\subsection{Non-square projection matrix}
\label{sec:nonsquare}
We briefly discuss the case $L>N$, i.e. the number of random projections is larger than
the dimension of $\mathbf x$.
The adversary is confronted with an ambiguized ternary vector $\mathbf u$,
and from it has to guess the $\mathbf v=\psi^{\rm stc}_\lambda(\mathbf{Wx})$
by guessing which locations in $\mathbf u$ contain the ambiguation noise.
When $L$ is larger than $N$, the adversary may be able to distinguish between wrong guesses and the correct guess,
as follows.
For a wrong guess it will typically hold that $WW^\dag \mathbf v_{\rm wrong}$
is far away from $\mathbf v_{\rm wrong}$, while on the other hand
it holds that $WW^\dag \mathbf v\approx \mathbf v$.
From the correct $\mathbf v$ an estimator for $\mathbf x$ is then obtained as
$\widehat{\mathbf x}=W^\dag\mathbf v$.
Hence, information-theoretically speaking, there is no privacy protection.
However, the amount of effort in going through all the possible guesses scales as
${S_t+S_n \choose S_n}$, which is huge.
The security is {\em computational}, not information-theoretic.


\section{Discussion}
\label{sec:Conclusion}

For quantizing HDSs we have established that there is a clearly identifiable optimal choice
for protecting the $V$ and $Z$ bits:
taking the number of quantization intervals to be even, and setting $m=2$.
However, for noise tolerance it is advantageous to set $m$ as large as possible.
The $Z$-leakage result for $m\to\infty$ (Fig.\,\ref{fig:leakHDS})
has some caveats. 
It is nice that a minimum exists at $F(-\tau)=p_0/2$ and $F(-\tau)=p_0$,
but unfortunately the operational meaning of $\tau$ is not really well defined.
A small shift of $\tau$ has little impact on the concept 
``this variable is abnormally far from zero'', but has a large effect in Fig.\,\ref{fig:leakHDS}.
It is left as a topic for future work to study this further.

There are no such subtleties for the Code Offset Method.
We think we can safely conclude that the COM has only negligible leakage.

For the SCA approach we have 
established that there is a clear gap between how much you know about $Z$ 
if you {\em do} and {\em do not} have access to a matching verification
measurement $\mathbf y(c)$. (Not having such access means trying to reconstruct $Z$ from the public data.)
This is visible as a gap (Fig.\,\ref{Fig:AveProbebilityOfError}) in the error probability for reconstructing~$Z$,
and as low mutual information $I(\mathbf{W}, \mathbf{U};Z)$ in Fig.\,\ref{fig:InfoLeakageWnotIdentity}.
Determining the leakage about sign($X_n$) is left for future work.
Other topics for future work are further experimentation with different choices of the projection matrix $\mathbf W$
and understanding the piecewise linear shape of the leakage curve.

\bibliographystyle{IEEEtran}
\bibliography{references}

\end{document}